\newtheorem{theorem}{Theorem}
\newtheorem{lemma}[theorem]{Lemma}
\newtheorem{corollary}[theorem]{Corollary}
\newcommand{\maxdeg}{\ensuremath{maxdeg}\xspace}
\newcommand{\degbound}{\ensuremath{\Delta}\xspace}
\newcommand{\dist}{\ensuremath{\lambda}\xspace}
\newcommand{\const}{\ensuremath{\gamma}\xspace}
\newcommand{\ds}{\ensuremath{\mathcal D}\xspace}
\newcommand{\gvd}{\ensuremath{\mathcal V}\xspace}
\newcommand{\for}{\ensuremath{\mathcal F}\xspace}
\newcommand{\lvt}[1]{\ensuremath{LVT(#1)}\xspace}
\newcommand{\clvt}[1]{\ensuremath{cLVT(#1)}\xspace}
\newcommand{\cov}[1]{\ensuremath{cOV(#1)}\xspace}
\newcommand{\ov}[1]{\ensuremath{OV(#1)}\xspace}
\newcommand{\rep}[1]{\ensuremath{rep(#1)}\xspace}
\newcommand{\repe}[2]{\ensuremath{rep_{(#1)}(#2)}\xspace}
\definecolor{ltg}{rgb}{0.9,0.9,0.9}
\definecolor{SEB}{rgb}{0,0,1}
\definecolor{JOHN}{rgb}{0.42,0,0.87}
\title{Confluent Persistence Revisited}
\date{}
\author{S\'{e}bastien Collette\thanks{Chargé de recherches du F.R.S.-FNRS. Université Libre de Bruxelles, Belgium. secollet@ulb.ac.be} \and John Iacono%
\thanks{Department of Computer Science and Engineering,
Polytechnic Institute of New York University, 6 MetroTech Center, Brooklyn NY 11201 USA. http://john.poly.edu. Research supported by NSF grants CCF-1018370, CCF-0430849 and by an Alfred P.~Sloan fellowship. Research partially completed as a visiting professor at MADALGO (Center for Massive Data Algorithmics, a Center of the Danish National Research Foundation), Department of Computer Science, Aarhus University, IT Parken, \r{A}bogade 34, DK-8200 \r{A}rhus N, Denmark.} 
\and 
Stefan Langerman%
\thanks{Maître de recherches du F.R.S.-FNRS. Université Libre de Bruxelles, Belgium. slanger@ulb.ac.be}}
\begin{document}
\maketitle

\begin{abstract} \noindent
It is shown how to enhance any data structure in the pointer model to make it confluently persistent, with efficient query and update times and limited space overhead. Updates are performed in $O(\log n)$ amortized time, and following a pointer takes $O(\log c \log n)$ time where $c$ is the in-degree of a node in the data structure. In particular, this proves that confluent persistence can be achieved at a logarithmic cost in the bounded in-degree model used widely in previous work. This is a $O(n/\log n)$-factor improvement over the previous known transform to make a data structure confluently persistent.
\end{abstract}

\section{Introduction}

\paragraph{Persistence.}
Abstractly, a \emph{data structure} is some collection of data that supports various \emph{operations} to be performed upon it. These operations can be partitioned into two groups: \emph{query} operations, that do not modify the structure, and \emph{updates} that do modify the structure. Classically, operations are performed on the current state of the structure, and in the case of an update, the current state is changed, and the past state is lost. 

\emph{Persistence} is the concept, first formally studied in~\cite{FullPersistence}, of allowing the efficient execution of operations not just on the current state, but on past states as well. The past states are referred to as \emph{versions} of the structure. There are three forms of persistence, which we discuss in increasing order of power and complexity.

The simplest kind, \emph{partial} persistence, is where one can execute query operations on any previous version of the structure, but where updates are only allowed on the current version. 
This implies that the versions form a list (or timeline), each version being derived from the latest (or most recent) version in the list.

In \emph{full} persistence, update and query operations may be performed on any version of the data structure that ever existed. In the case of an update on a past version, a new version is created, derived from that past version. While in the classic sense, and in partial persistence, time takes on its normal linear structure, in full persistence, time as viewed by the data structure no longer has a linear topology, it has become tree-like. The \emph{version tree} is the graph where every node is a version, and an edge is drawn from every version to the version it was derived from.

In what has been presented so far, updates work on a single version of the structure to produce a new version. One type of operation common in data structures is a \emph{merge} or \emph{meld} operation that takes two logically independent components of a data structure and merges them into one. Such operations have been supported so far so long as the components lie within the same version. However, since full persistence gives a rich tree topology of versions, this opens the possibility of performing a merge-type operation with components coming from different versions. 
Such update operations, that take as input two (or more) versions and produce a single new version, are called \emph{confluent updates}; full persistence with the support of confluent updates is called \emph{confluent persistence}, and was introduced in \cite{PersistentLists, FullPersistence}. In confluent persistence, since each version is derived from one or more previous versions, the graph formed by the versions derivation relationship is not longer a tree, it is now the \emph{version DAG}.

\paragraph{Applications of persistence.} 
Persistence is a very useful tool, with applications ranging from algorithm design to code debugging. We list some of these here.  

From a theoretical point of view, persistence has played a major role in the development of efficient algorithms. For instance, the best algorithms for planar point location~\cite{PointLocPersistence} and ray-shooting~\cite{kaplan2009linear} use persistence. 

A widely-used technique in computational geometry consists in performing a sweep with a vertical line through a data set (set of points, set of segments, line arrangement, map, data structure...), and storing in a data structure the set of items intersected by the line. As we sweep, we update the data structure to accurately represent the objects intersected by the sweep line.  If the data structure is made persistent, one can then retrieve all the information that has been swept through, by going back to previous versions. For instance, this can be used to solve optimally the segment intersection problem~\cite{persistenceCG}.

Practically, researchers have developed generic tools, implemented in various object-oriented programming languages, to convert any data structure into a partially or fully persistent one by simply adding one keyword to its source code~\cite{PersistenceImplemented}. In other words, persistence can be obtained easily from a developer's point of view, which further motivates the development of algorithms making use of persistent structures. 

Finally, confluent persistence is useful in many contexts where one wants to backup all versions of some information. Typical examples include version control systems~\cite{ConfluentTries_Algorithmica}, such as Subversion and CVS. In these systems, all the versions of some files are stored, in order to be able to go back to any previous version of any file. A new version can be obtained by combining files from previous versions. Similarly, code debugging (e.g.,~\cite{DebugPersistence}) can greatly benefit from the use of persistent data structures: by storing all the versions of the variables over the lifetime of a program, one can check, at any point in time, what the value of any variable was, and how it was modified by the program. 

\paragraph{Previous Results.}
Many results were obtained by considering specific data structures. For instance, ad-hoc solutions for fully persistent stacks and lists\cite{PersistentLists,PersistentLists2}, arrays\cite{PersistentArrays}, or confluently persistent tries\cite{ConfluentTries} and deques\cite{ConfluentDeques} have been proposed, among many others. Here, we focus on generic methods, which can be implemented transparently, and add support of persistent operations to any data structure in the pointer model. Such support can be added trivially by copying the structure before performing any update, so the question arises as to what is the most efficient way, in terms of both space and time, to support persistent operations.

In that direction, solutions are known since 1986, when Driscoll\emph{~et al.}~\cite{FullPersistence} proposed an efficient method to transform any data structure in the pointer model with constant in-degree into a partially or fully persistent one. The space usage and time overhead of their construction is asymptotically optimal: the space used is the number of total changes in the structure, while operations are slowed down by only a constant factor vis-\`{a}-vis their non-persistent counterparts. 

In their seminal paper, Driscoll\emph{~et al.}~mentioned confluent persistence as an open problem; it was formally defined in 1994~\cite{PersistentLists}. And despite many publications on persistence, the problem remained mostly open until 2003, when Fiat and Kaplan~\cite{ConfluentPersistence} gave the first generic method to obtain confluently persistent structures.

As pointed out by Fiat and Kaplan, there is a fundamental problem to overcome to get efficient confluent persistence: a merge of two previous versions can potentially double the size of the data structure, and therefore in $k$ merges, one can get a data structure of size $2^{\Omega(k)}$. Based on this, they provide a lower bound of $\Omega(U \, e(D))$ bits to store all versions where $U$ is the number of atomic changes performed over the whole history of the structure and $e(D)$ is the so-called ``effective" depth of the version DAG, defined roughly as the logarithm of the maximum number of different paths from the root of the DAG to each of its versions. 

In terms of upper bounds, the solution proposed in~\cite{ConfluentPersistence} has a space complexity of $O(U \, e(D) \log U)$ bits and an overhead of $O(e(D) + \log U)$ time per operation on the enhanced data structure. However, since the effective depth can be linear in the number of updates performed, these time and space bounds could be terrible, and in fact in the worst case would not be an improvement over simply copying the whole structure to create each new version.

\paragraph{Our Work.}
To overcome this issue of exponential structures, we tackle the problem in a different way: what if we want to allow confluent persistent operations on a pointer-model structure, but we never use two versions of the \emph{same} node in a confluent merge? We will show in the next sections that with this extra restriction, confluent persistence can be achieved very efficiently, using $\Theta(U)$ space\footnote{If words are assumed to have $\log U$ bits, and assignment of a word takes $O(1)$ time, our approach uses $O(U \log U)$ bits, using the notation of~\cite{ConfluentPersistence}.}, and $O(\log U)$ time to implement a unit-cost pointer-model operation, when the data structure to enhance has constant in-degree. 

This restriction on merging versions of the same node is natural. For instance, for confluent version control system, such as Subversion, it would mean that we do not allow to create a new version with two separate copies of the same file. For confluent text editors, we would not be allowed to merge two copies of the same paragraph of text, but we could merge a section from version $a$ with another section from version $b$. Furthermore, this restriction can be overcome, by explicitly duplicating the nodes, which then allows us to merge \emph{copies} of different versions of the same node. In other words, merging parts of the structure with themselves can be done, but at a cost linear in their size. 

Given the prohibition on self-merging, our data structure is a significant improvement over that of~\cite{ConfluentPersistence} when confluent updates are numerous (and supporting confluent updates is the whole point of confluent persistence!). For example, consider a data structure that maintains strings and allows split and concatenate operations, and queries as to the $i$th character of a given string; this can be done using, for example, red-black trees in $O(\log n)$ time. Concatenation is the confluent update; so as long as we do not allow concatenation with a previous version of oneself the requirements of our structure are met. If we start with a single string of length $n$ and perform $n$ operations, our total runtime will be $O(n \log^2 n)$, whereas using the structure of Fiat and Kaplan, the runtime bound is $O(n^2 \log n)$, a substantial difference.

\paragraph{Overview of approach.}
The main idea used here is to maintain, in each node of the data structure, and associated to each pointer of the structure, a compressed representation of (a tree of) the versions where that node or pointer existed. In other words, each object will contain some compacted representation of its own history. Using a combination of fractional cascading and bi-directional pointers, we will ensure that one can efficiently browse a given version of the structure and know what the state of each node or pointer was in that specific version. 

Out of the nodes of the data structure, we also maintain the full version DAG, as well as a carefully chosen collection of  disjoint trees -- or forest -- covering all versions. This forest will be stored in a link-cut tree. Using this forest, we will be able to efficiently determine if a version was derived from another, which is essential to navigate in the structure. This is the key to avoid the use of order maintenance structures, as used for full persistence, which are not powerful enough to support confluent operations. 

\section{Model}
\paragraph{Computational Model}Our results belong to the pointer model. Let \ds be a data structure which shall be made confluently persistent. The data structure $\mathcal{D}$ is required to be a \emph{linked data structure} as defined in Driscoll\emph{~et al.~}\cite{FullPersistence}: it is composed of a set of \emph{nodes}, each node contains a bounded number of fields, which can either contain values (integers, booleans, \emph{etc.}) or pointers. We refer to pointers of the data structure $\mathcal{D}$ as \emph{edges}; these edges are \emph{directed}, from a \emph{predecessor} to a \emph{successor}. 
The atomic operations that we can perform on a data structure in this model are to \emph{create} or \emph{delete} nodes, to follow an edge, or to view or change  value of any field of a node (including edges). Changing one field is stipulated to cost $O(1)$ time. 

Each structure has a unique node designated as the \emph{root}. In a regular operation, one root is specified and a sequence of atomic operations is performed. The other type of operation is a \emph{merge} operation. In a merge operation, two roots are specified and at the end of the operation the second root is no longer considered to be a root; this implies that the connected components of the two roots have been merged. While temporary pointers are allowed during the execution of an operation, no pointers into the structure are permitted to be maintained between operations other than to the roots.

\paragraph{In-degree} Our transformation to make $\mathcal{D}$ confluently persistent requires a constant bound on the incoming degree of every node of $\mathcal{D}$. The bound will be denoted by \degbound. Notice that if the incoming degree of a data structure is not bounded by a given constant, we can easily transform the data structure in such a way that every node will have a bounded in-degree: we replace each node with high in-degree by a balanced binary search tree, of depth $O(\log \maxdeg)$, where \maxdeg is the maximum in-degree of a node. Then, instead of having pointers pointing directly to the node, the pointers lead to the leaves of the balanced binary search tree. 

Asymptotically, this transformation is made at no additional space cost (as the number of constant-sized nodes added is linear in the number of edges). But the cost of following a pointer is then $O(\log \maxdeg)$, and the other operations are still constant. As we require that all outgoing and incoming edges of a node are deleted before deleting the node itself, creating and deleting a node is also done at no additional time cost in the modified data structure. This in-degree restriction is common to the previous work on persistence (see for instance \cite{FullPersistence}).

\paragraph{Confluent Persistence} To make a data structure confluently persistent is to augment the data structure with the following temporal operations: given any version, execute a regular operation, creating a new version for any changes that are made; execute a merge operation which combines the information contained in two versions to obtain a new version.

To keep the analysis simple, we will consider, in what follows, that the versions in the version DAG have out-degree at most 2. This is not a constraint, because if we want to fork multiple versions from a single version, we can always do so by adding a constant number of dummy version nodes. 

Finally, as pointed out in the introduction, we have to avoid the case where a version is obtained by merging two previous versions containing the same node, as this operation could lead to a data structure of size $\Omega(2^{k})$ in $k$ operations. In this work, we consider that such a merge will never occur: we forbid the user to merge two versions containing common nodes. It is the the responsibility of the user to ensure that common nodes do not exist in versions to be merged\footnote{This restriction can be circumvented by replacing one of the two identical nodes with a duplicate before executing a merge operation. This will have the effect of increasing space usage to the actual size of the data structure being represented.}. 
In practice, a merge will consist in adding an edge between two nodes belonging to two different versions, with the effect of connecting two disjoint components. 

\section{Augmenting the data structure}
In what follows we will show how a data structure \ds is augmented, and associated with local and global version DAGs. A first thing to notice is that when we refer to \ds, we mean the collection of all the nodes in all versions. The extra information added to \ds will let us differentiate between the nodes which are valid or not in any given version, and will let us determine what is the information contained in the fields of a node for a specific version. 

\subsection{Global Version DAG}
We maintain a global version DAG \gvd, associated to \ds. Each time a new version is created, a new vertex representing it is added to \gvd. Except for the root, each version is the child of one or two versions, depending if the version is the result of a confluent merge, or if it is simply a branch. When a version is created, it receives an ID, which is an integer larger than any other version ID in the DAG; in other words we maintain a version counter that we increment each time a new version is created. This ensures that the DAG has the min-heap property: the ID of each version is strictly larger than the version ID of its parent(s), and each path from the root is monotonically increasing. 

We maintain bi-directional pointers so that we can, from each version, determine from what version it was derived.
The purpose of \gvd is to give access to the root of each version; therefore each version $v$ of \gvd contains a pointer to the node in \ds which is the root of \ds in version $v$. 

\begin{figure}
\begin{center}
\includegraphics[scale=.7]{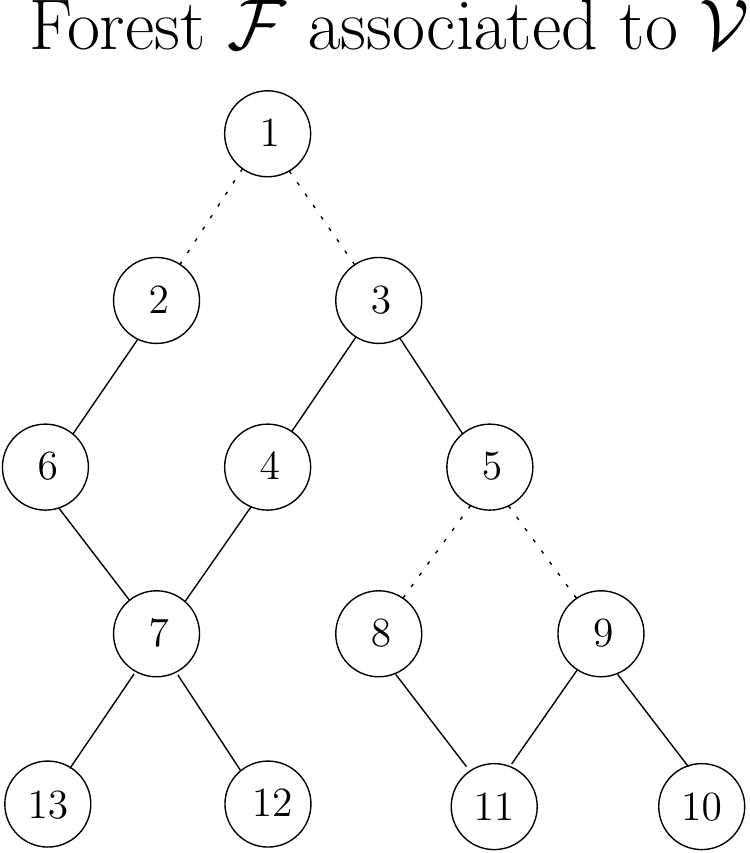}
\end{center}
\caption{\label{fig:subtree}The forest defined by the plain edges is stored as a link-cut tree.}
\end{figure}

We also maintain a secondary representation of a subgraph of $\gvd$: we maintain a collection of disjoint trees, covering all the nodes of $\gvd$ and containing a subset of its edges. When a version is added to $\gvd$, it is also added to the collection of trees and adjacent to the same previous version. 
This forest, denoted by $\for$, will be stored in a link-cut tree~\cite{LinkCutTree}, with bi-directional pointers between the corresponding versions in $\for$ and $\gvd$. Whenever a confluent merge occurs, we determine, in $\for$, what is the lowest common ancestor of the two versions that we merge. And we delete (or cut) the two edges connecting the lowest common ancestor to its descendants, as shown on Figure~\ref{fig:subtree}. This ensures that no cycle will ever be created, and therefore each component of $\for$ is indeed a combinatorial tree and $\for$ is a forest. Note that in the case where no lowest common ancestor exists (that is, when the two version merged already belong to two different trees in $\for$), we do not cut any edge, and the invariant that the resulting graph is a tree is indeed preserved. 

To find a lowest common ancestor in a link-cut tree between node $v$ and node $v'$, we store the versions with their version ID as the key. Link-cut trees allow to find the lowest ID on any path in $O(log n)$ time, and allow to check if two nodes belong to the same tree of the forest in $O(log n)$ time~\cite{LinkCutTree}. Therefore by querying the path $vv'$, we get the version with lowest ID on the path, which turns out to be the lowest common ancestor of $v$ and $v'$, or we determine that $v$ and $v'$ are part of disjoint components. 

Thus, initially, when \gvd is created, it contains a single version. When we add new versions to \gvd, we also insert them in the link-cut tree in $O(\log n)$ amortized time. If a version is created by a merge, it might require two cuts in \for, done in $O(\log n)$ amortized time. 

\subsection{Local Version Tree}
Each node $d$ of \ds is present in a certain number of versions, we denote the subgraph of \gvd where $d$ exists by the \emph{local version DAG} of $d$.  Because of the requirement that two versions can only be merged if they contain disjoint sets of nodes, local version DAG of $d$ is, in fact, a tree:
\begin{lemma}
The local version DAG of a node is a tree. 
\end{lemma}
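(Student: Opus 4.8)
The plan is to show that the local version DAG of a node $d$ cannot contain a vertex with two distinct parents that both lie in the local version DAG, since any such vertex would be a confluent merge of two versions both containing $d$, contradicting the self-merge prohibition.

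First I would recall the relevant structure. The local version DAG of $d$ is the subgraph of \gvd induced by all versions in which $d$ exists, inheriting the derivation edges of \gvd. Since it is a subgraph of the version DAG, it is acyclic, and by the min-heap property each version has strictly larger ID than its parents; so to prove it is a tree it suffices to show that every vertex in it has \emph{at most one} parent within the local version DAG (equivalently, every node other than the ``first'' version containing $d$ has exactly one parent in which $d$ already exists).

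The key step is the case analysis on how a version $v$ containing $d$ can arise from its parents in \gvd. Since the out-degree of versions is at most $2$ and each version is the child of one or two parents, $v$ has either one parent or two parents in \gvd. If $v$ has a single parent $p$, then trivially $v$ has at most one parent in the local version DAG. If $v$ has two parents $p_1, p_2$ — i.e.\ $v$ is a confluent merge — I would argue that $d$ cannot belong to \emph{both} $p_1$ and $p_2$: by the standing assumption, a merge is only performed on two versions with disjoint node sets, so if $d \in p_1$ then $d \notin p_2$. A node present in $v$ came from (at least) one of the merged versions; but because the merged versions share no nodes, $d$ is inherited from exactly one of $p_1, p_2$. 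Hence at most one of $v$'s parents contains $d$, so $v$ has at most one parent in the local version DAG.

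Having shown every vertex of the local version DAG has at most one parent inside it, I would finish by noting that a directed acyclic graph in which every vertex has in-degree at most one, together with the existence of a unique earliest version of $d$ (the version in which $d$ is created, which is the unique root of this subgraph), is precisely a rooted tree: follow the single parent pointer upward from any version and, by the min-heap / acyclicity property, this terminates at that creation version. The main obstacle — really the only substantive point — is justifying that $d$ is inherited from exactly one merged parent; this rests entirely on the disjointness hypothesis for merges, so I would state that dependence explicitly rather than treat it as obvious.
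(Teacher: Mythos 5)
Your proposal is correct and follows essentially the same route as the paper: the heart of both arguments is that the no-self-merge restriction forbids a version containing $d$ from having two parents that also contain $d$, combined with the observation that every non-creation version containing $d$ inherits it from a parent, so the local version DAG is a connected, in-degree-one subgraph rooted at the creation version. Your write-up is merely a more explicit, case-by-case rendering of the paper's two-sentence contradiction argument.
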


\begin{proof}
First, notice that the set of versions of \gvd where a given node $d$ exists is connected: each time a new version $v$ is created which contains $d$, $v$ must be the child of a version containing $d$. Second, imagine that the local version DAG of $d$ is not a tree: then there exists a version $v$ containing $d$, which has two (or more) incoming edges, from two versions $v_{1}$ and $v_{2}$ containing $d$. But this contradicts the requirement that merging two versions containing the same node is forbidden.
\end{proof}

\begin{figure}
\begin{center}
\includegraphics[scale=.7]{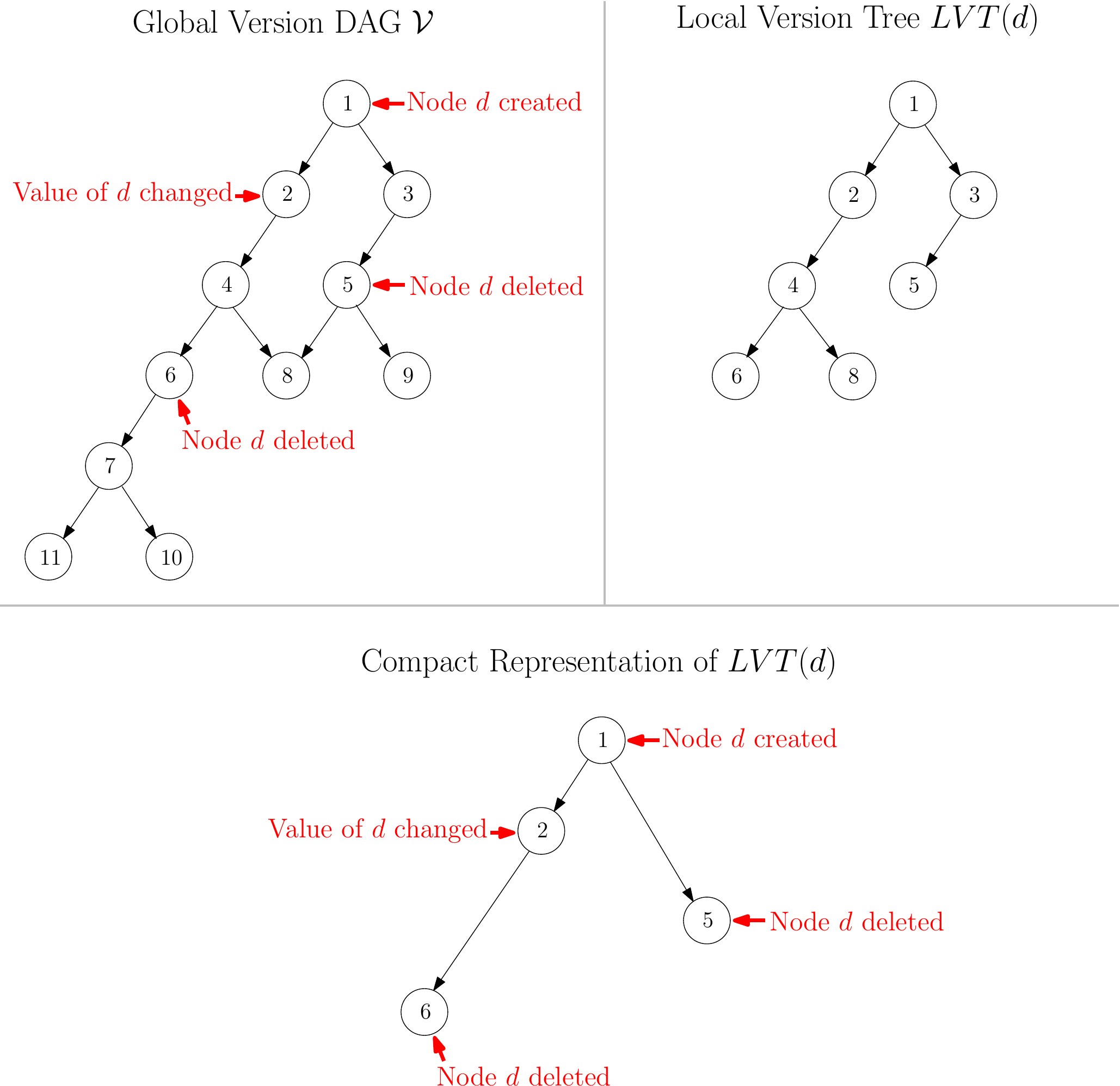}
\end{center}
\caption{\label{fig:compact}Global version DAG, local version tree and its compact representation.}
\end{figure}

Thus, henceforth the local version DAG of a node $d$ shall be known as its \emph{local version tree} and denoted as \lvt d. For each node $d$, the local version tree is not explicitly stored. This would take too much time and space. 
What is stored instead, is a compressed version of the \lvt d called the \emph{compressed local version tree} and denoted as \clvt d. Conceptually, \clvt d is obtained from \lvt d by contracting with their parent all vertices corresponding to versions where none of the fields of $d$ were changed. (Note that the root of \lvt t will not be contracted as this is the version where node $d$ came into existence and its fields were initialized). Thus, the vertices of \clvt d correspond to maximally connected subtrees of \lvt d where all the fields in $d$ are unchanged and identical in all the versions corresponding to each vertex of each subtrees.

A vertex of \lvt d is \emph{explicit} if it is in \clvt d; otherwise it is termed \emph{implicit}. An implicit node is \emph{represented by} the node in the compressed local version tree that it was contracted to; the representative of $v$ is denoted by $\rep v$. The fields of node $d$ in a implicit version $v$ in \lvt d are identical to the fields of the node in $d$ in version \rep v.

As previously mentioned, for each node $d$ of the data structure \clvt d is explicitly stored. Associated with each vertex of this compressed local version tree, all the fields of the node in the version corresponding to the vertex  are stored. 

\subsection{Edge Overlays}
Let $a,b$ be two nodes in \ds, such that there is an edge $e$ from $a$ to $b$ in some version $v$. Notice that the set of versions for which the edge exists is a connected subset of \lvt{a} and \lvt{b}, which is defined to be \ov{a,b}; the representatives of this set of versions is a connected subset of \clvt a and \clvt b, the intersection of which which is defined to be \cov{a,b}. For an edge $(a,b)$ that exists at some version $v$, we call the lowest ancestor of $v$ in \lvt{a} (or \clvt a) that is in \ov{a,b} the representative of $v$ with respect to edge $(a,b)$ and denote this as \repe{a,b}{v}.

\begin{figure}
\begin{center}
\includegraphics[scale=.8]{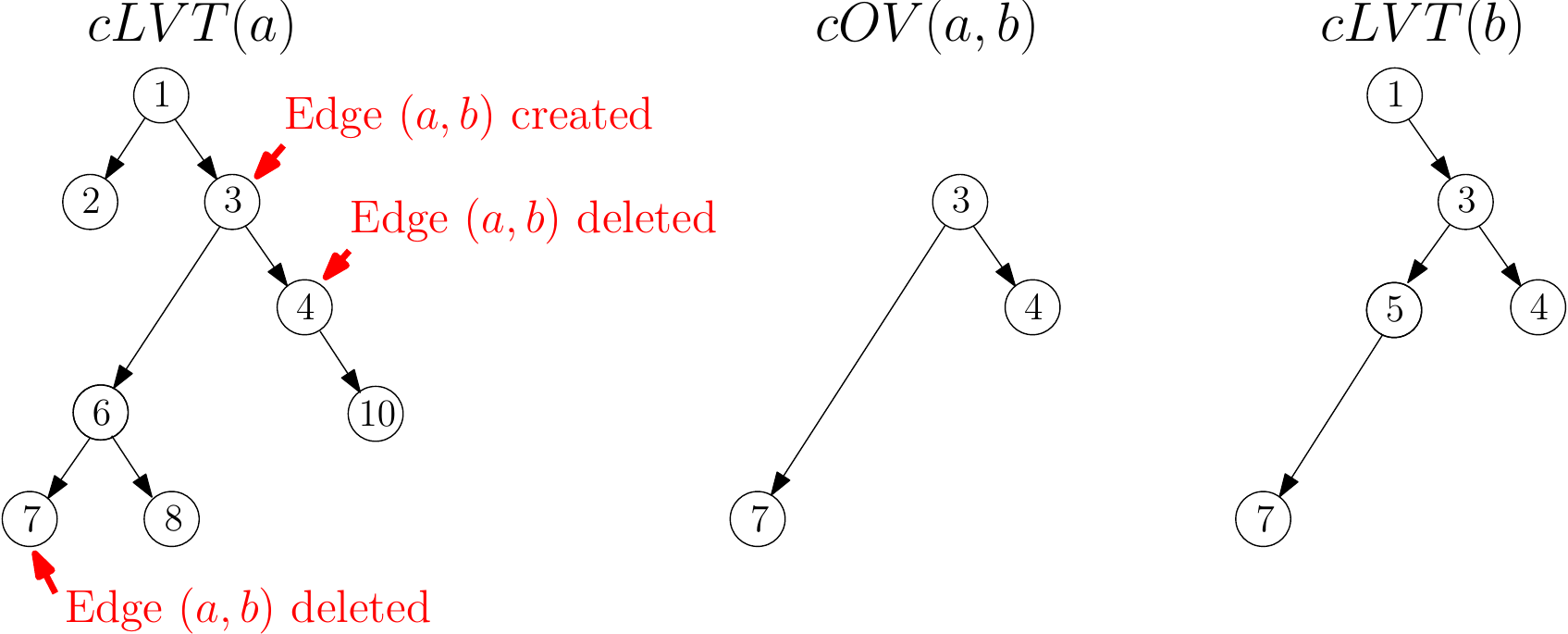}
\end{center}
\caption{\label{fig:overlay}Compact representation of the overlay tree for the edge $(a,b)$.}
\end{figure}

For each edge $e$, we store the compressed representation of the overlay tree \cov{a,b}  (see Figure~\ref{fig:overlay}). 
We maintain the following invariants, by adding explicit nodes when required: \cov{a,b} must always contain an explicit vertex for each version were the edge has been created or deleted; and each explicit vertex of the contracted overlay must explicitly appear in both \clvt{a} and \clvt{b} (but not necessarily reversely).

\subsection{Node-Edge Navigation Structure}

The compressed version tree structure representing versions of nodes and the compressed  edge overlay structure representing versions of edges have been defined. Now what is needed is some additional structure to help navigate from the versions of one structure to the other and vice-versa.
To this effect several categories of pointers which will be stored are defined; these constitute an aspect of the structure called the \emph{node-edge navigation structure}: 

\begin{itemize}
\item \emph{Version-edge pointers.}
For each edge $(a,b)$ that exists in a version $v$ in \clvt{a},  add pointers from
$v$ in \clvt{a} to \repe{a,b}{v} in \cov{a,b}. 

\item \emph{Edge-version pointers.} These are pointers back from all versions $v$ in \cov{a,b} to the same version $v$ in \clvt{a}.

\item \emph{Inverse edge-version pointers.} A list of reverse pointers, so that from a version in $\clvt{b}$ we can retrieve the list of its incoming edge-version pointers.    
\end{itemize}

These pointers in the node-edge navigation structure properly associate nodes and edges, but navigation between the two structures is not always efficient since there could, for example, be a path of many vertices in a compressed version tree of a node without a pointer in the node-edge navigation structure to a vertex in a compressed overlay versions of an incident edge. For efficiency reasons, storing all such pointers is not possible; however we employ the technique of fractional cascading~\cite{FracCasc} in order to allow efficient navigation.

For each node $d$ of \ds, a certain number of vertices appearing in $\clvt{d}$ will be \emph{senators}. We will ensure to pick as senator at most a fraction $1/{\dist}$ of the vertices on each path (as will be shown later, $\dist$ is a small constant). A version $v$ is a senator when it has an ancestor $v'$, at distance $\dist$, such that no node on the path from $v$ to $v'$ is senatorial. We maintain senators as new versions are added to the local version tree, by going up in the tree at most $\dist$ steps. 

When a version $v$ is added to \clvt{d}, if it is a senator, we will add this version $v$ explicitly in all its incoming edge overlays, as well as in \clvt{d'} for all nodes $d'$ of \ds for which there is an edge from $d'$ to $d$ at version $v$. These newly inserted explicit versions might also be senators, and in that case we will propagate senators to their neighbors and so on. We will show the amortized cost of this operation in the analysis section. 

At the end of this process, we end up with local version trees in which there is always a senator which is a ancestor in the local version tree at distance at most $\dist$ from every version. 

\subsection{Browsing the Data Structure}
The key idea is that each node of \ds does not contain, by itself, enough information to determine what is the value of the fields of the node for a given version $v$. However, if we know that information for node $a$ at version $v$, and that we follow an edge to node $b$, we can deduce what is the value of the fields of $b$ in version $v$. Provided we know how to determine the value of the root of \ds in a given version $v$, we can then browse the data structure. 

\begin{figure}
\begin{center}
\includegraphics[scale=.8]{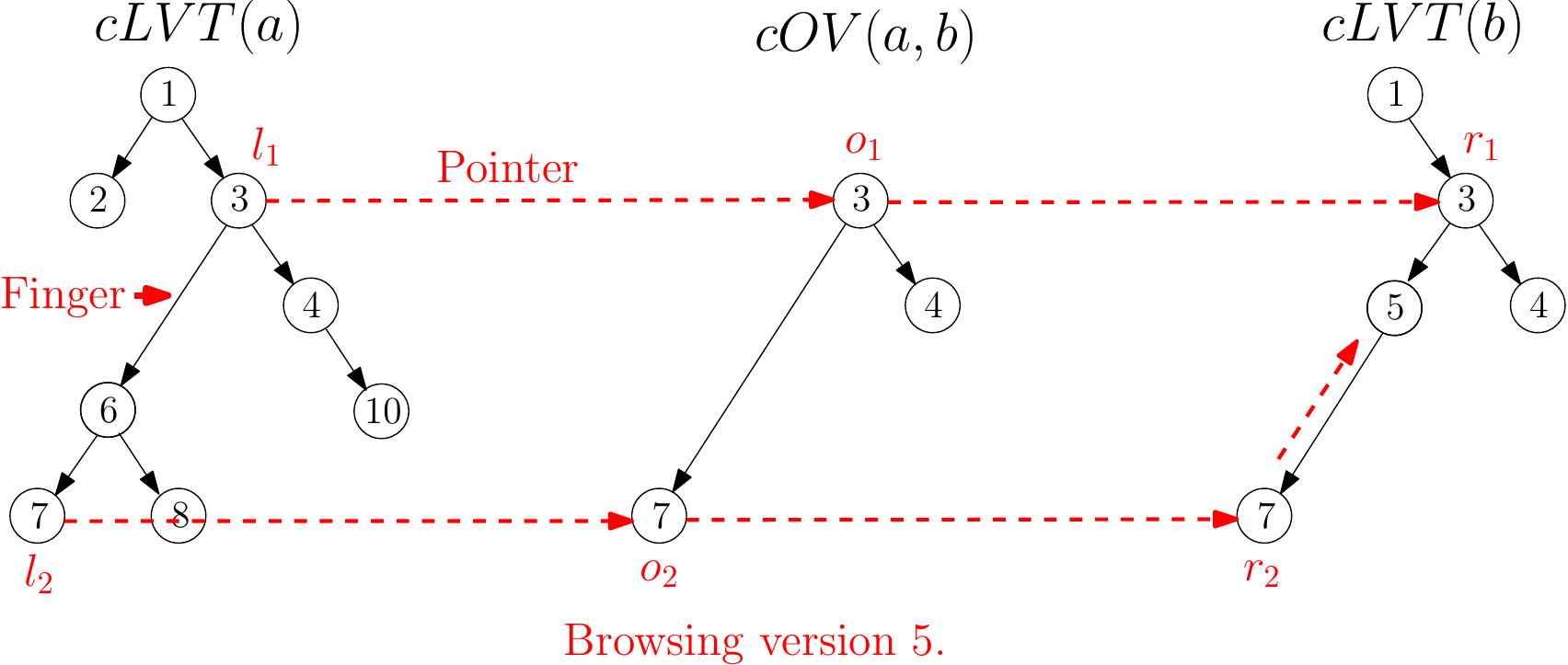}
\end{center}
\caption{\label{fig:browsing}To move the finger from a node $a$ to a node $b$, we use the edge overlay pointers. This illustrates the case with two disjoint pointers.}
\end{figure}

Suppose we are browsing version $v$; let $a$ be a node for which we know the representative $v'$ of $v$ in the compressed local version tree of $a$; thus all the fields of $a$ in version $v$ are immediately known. 
Suppose that one of these fields indicates there is an edge from $a$ to $b$ in version $v$, and that we wish to follow the edge so that we know the fields of $b$ in version $v$. This is the fundamental act of navigation. In \clvt{a} (see Figure~\ref{fig:browsing}), we find the closest vertex, on the path from the finger to the root of the local version tree, which explicitly exists in $\cov{a,b}$. This could be a senator, or any version in which the edge $(a,b)$ was modified. We denote this vertex by $l_{1}$. In the case where $v \not = l_1$, we also find the closest vertex which is a descendant of $v$  in the local version tree of $v$ that is present in \cov{a,b}, and denote it by $l_{2}$. If there is no such vertex below the finger, we take $l_{2}=l_{1}$. 	

Thus we now have two (possibly non-disjoint) vertices $l_{1}$ and $l_{2}$, which are in \clvt{a}, and which are also in \cov{a,b}. Following the version-edge pointers, we find vertices $o_{1}$ and $o_{2}$ in \cov{a,b}. Because of the invariant of edge overlays, we know that if $o_{1}$ and $o_{2}$ appear explicitly in \cov{a,b}, they also appear explicitly in \clvt{b}, and we find one or two corresponding vertices $r_{1}$ and $r_{2}$ in \clvt{b} using the edge-version pointers. 

It remains to place the finger on the vertex representing $v$ in $\clvt{b}$. Notice that, thanks to fractional cascading, the distance between $r_{1}$ and $r_{2}$ in $\clvt{b}$ is at most $\dist +1$. We first deal with the case where $r_{1}$ and $r_{2}$ are disjoint. Then $v$ is a vertex of the path between them in the compressed local version tree of $b$. We browse the path bottom-up, and using the version ID, the labels of the path are decreasing as we go up. We stop as soon as the ID of $v$ is larger or equal to that of the vertex we are visiting; this gives the representative of $v$ in the compressed local version tree of $b$, and thus all fields of $b$ in version $v$ are now accessible.

\begin{figure}
\begin{center}
\includegraphics[scale=.8]{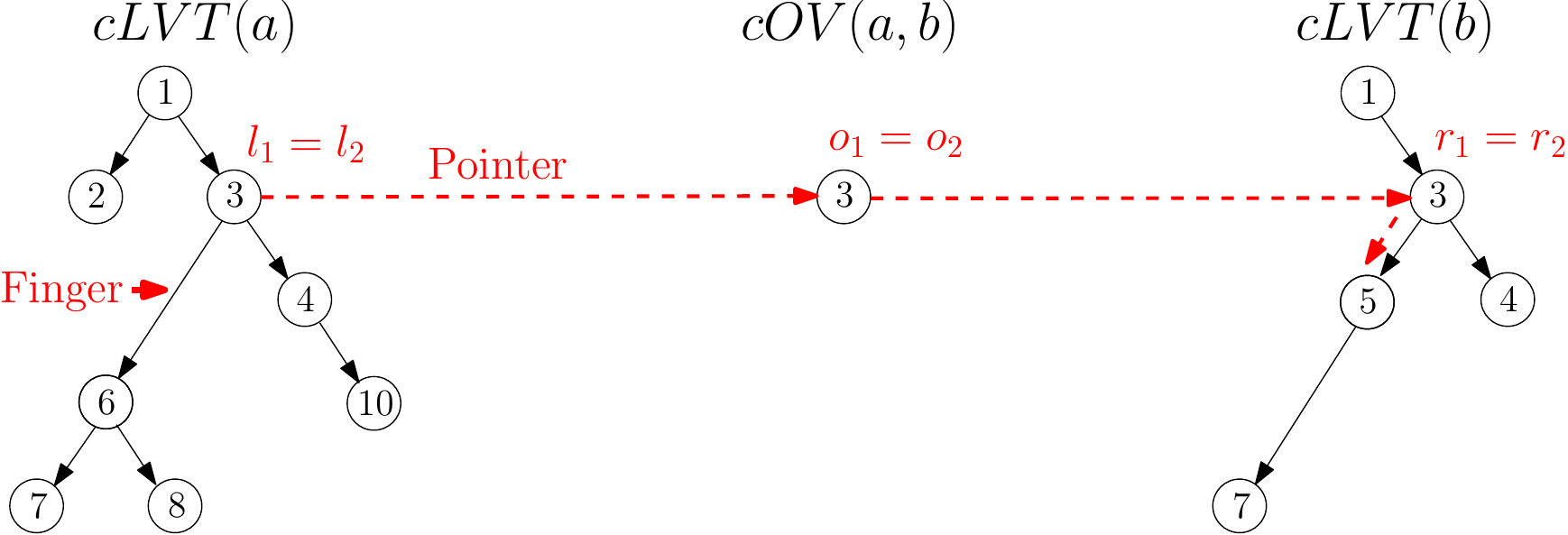}
\end{center}
\caption{\label{fig:browsing2}This illustrates the case where $r_{1}=r_{2}$. This would be the case if, for instance, the edge $(a,b)$ is created at version $3$, and is never deleted.}
\end{figure}

In the other case where the two vertices are identical (i.e., $r_{1}=r_{2}$, see Figure~\ref{fig:browsing2}), we know that the senator of $v$ is a vertex at distance less than $\dist$ of $r_{1}$ in \clvt{b}. If $r_{1}$ is a leaf of the contracted representation of $\clvt{b}$, we can place the finger on $r_{1}$. If not, we need to determine if $v$ belongs to the left or the right subtree of $r_{1}$. First notice that if $o_{1}$ was not a leaf of \cov{a,b}, then we know that $v$ should be placed on $r_{1}$, because otherwise we would have had two distinct pointers $r_{1}$ and~$r_{2}$. 

Thus, we can suppose that $o_{1}$ is a leaf of \cov{a,b}. This is where we use the forest $\for$ associated to $\gvd$: in $\for$, we cut (temporarily) the one or two edges connecting to the subtrees of version $r_{1}$. 

As $o_{1}$ is a leaf of $\cov{a,b}$, we know that, if a confluent merge happened in a version derived from $o_{1}$, the lowest common ancestor of the two versions merged cannot be in any subtree of $\cov{a,b}$ (for otherwise, we merged two versions in which both $a$ and $b$ existed, as the edge $a,b$ was valid). As no lowest common ancestor of any confluent merge can be part of a subtree of $r_{1}=o_{1}$, this means that all implicit and explicit versions in each subtree of $r_{1}$ in \lvt{b} belong to the same component of \for. 

Also, because of fractional cascading, we know that $r_{1}$ can only have a number of explicit descendants which is a function of $\dist$. For each edge of the subtree of $r_{1}$ in \clvt{b}, we can check if $v$ is an implicit node represented by that edge (by cutting temporarily in $\for$, in order to isolate the implicit nodes represented by that edge form the rest of the graph); we can also cut each of the explicit leaves and check if $v$ is represented by any of corresponding implicit subtrees. One of these cuts will isolate the edge or the vertex of $\clvt{b}$ on which the finger must be placed; we can check if an implicit version belongs to one of these cut parts in $O(log n)$ amortized time. Once we have found the appropriate location, we move the finger and all fields of $b$ in version $v$ are accessible.

In conclusion, we covered all cases, and starting from  a finger in \clvt{a}, we can place the finger accordingly in \clvt{b}, with a maximum of cuts in the link-cut tree proportional to the size of the $\dist$-neighborhood of the node $r_{1}$ in $\clvt{b}$. As $\dist$ is a constant (determined in the next section), the neighborhood has constant size.

We are still missing a final step: how to access to the root of version $v$? We can follow the pointer from the global version DAG, but we need to end up in a specific version. For each version created, we make sure that the pointer from the global version DAG points to that specific version in the node which is the root at version $v$, and we make that version explicit in that node. In other words, the node corresponding to the root of \ds for version $v$ will have an explicit vertex for $v$ in its local version tree.

\section{Analysis}
A method has been presented describing an algorithmic process to augment any data structure meeting our requirements to make it confluently persistent. In this section, we will analyze the time and space complexity of all these operations.
Recall that $n$ is the total number of atomic modifications performed so far.

\subsection{Creating a new version}

\begin{lemma}\label{lem:newversion}
Given a version $v$, deriving a new version takes $O(\log n)$ amortized time and space (either through a new branch or a confluent merge).
\end{lemma}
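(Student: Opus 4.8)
The plan is to account separately for the cost of the two objects that must be updated when a new version $v$ is created: the global structures ($\gvd$ and the link-cut forest $\for$), and the local structures (the compressed local version trees $\clvt{d}$, the compressed overlays $\cov{a,b}$, and the node-edge navigation pointers), including the senator-maintenance that may cascade. I would first dispose of the global part, since the paper has essentially already done this: adding $v$ to $\gvd$ with a fresh monotone ID and bidirectional pointers is $O(1)$; inserting $v$ into $\for$ adjacent to its parent is a link costing $O(\log n)$ amortized; and if $v$ arises from a confluent merge, we locate the lowest common ancestor by a min-ID path query in $\for$ and perform at most two cuts, each $O(\log n)$ amortized. So the global bookkeeping is $O(\log n)$ amortized time and $O(1)$ space.

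Next I would turn to the local updates. A single regular operation changes some set of fields, but by the unit-cost convention each changed field is one atomic modification, so it suffices to argue that \emph{each} atomic change contributes $O(\log n)$ amortized time and $O(1)$ amortized space. When a field of $d$ changes in version $v$, we add one explicit vertex for $v$ to $\clvt{d}$ (storing a copy of $d$'s fields, $O(1)$ space), and we install the version-edge, edge-version, and inverse edge-version pointers for the incident edges, together with any forced explicit vertices in the relevant $\cov{a,b}$ and in the $\clvt{\cdot}$ of the neighbour across each edge; since $d$ has in-degree at most $\degbound$ and out-degree bounded by the (constant) node arity, this is $O(1)$ insertions per atomic change.

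The main obstacle is the senator maintenance, because it is the only step that can cascade beyond a constant neighbourhood. I would argue this by a potential/charging argument. When $v$ is inserted into $\clvt{d}$ we walk up at most $\dist$ ancestors to test the ``no senator within $\dist$'' condition; if $v$ becomes a senator we must make it explicit in its incoming edge overlays and in the $\clvt{d'}$ of each in-neighbour $d'$, which may in turn create new senators there, and so on. The key quantitative fact, to be established in the amortized analysis, is that because senators are chosen at most a $1/\dist$ fraction of the vertices along any path and $\dist$ is a suitably large constant, the number of \emph{new} explicit vertices and senators created by one insertion is $O(1)$ amortized: each promotion can be charged against the $\Theta(\dist)$ non-senatorial vertices on the path that justified it, and the bounded in-degree $\degbound$ keeps the branching of the cascade bounded. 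Hence the total number of explicit vertices and navigation pointers created per atomic change is $O(1)$ amortized, giving $O(1)$ amortized space; each such insertion into $\clvt{d}$, $\cov{a,b}$, or the fractional-cascading and link-cut machinery costs $O(\log n)$, for $O(\log n)$ amortized time per atomic change.

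Summing the global $O(\log n)$ term and the per-change $O(\log n)$ terms, and using that deriving a new version performs a bounded number of global operations plus one atomic change per unit-cost step, yields the claimed $O(\log n)$ amortized time and space for deriving a new version, whether by branch or by confluent merge. The part I expect to require the most care, and which I would defer to the dedicated senator/fractional-cascading analysis, is the proof that the senator cascade terminates with only $O(1)$ amortized promotions and that the constant $\dist$ can be fixed to make both the cascade cost and the browsing-time $\dist$-neighbourhood bounds hold simultaneously.
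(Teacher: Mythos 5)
Your proposal is correct and follows essentially the same decomposition as the paper's proof: global bookkeeping in $\gvd$ and the link-cut forest $\for$ (including the LCA query and cuts for a merge) in $O(\log n)$ amortized time, plus local explicit-version insertion whose senator cascade is deferred to a separate potential-function analysis, exactly as the paper defers to Lemma~\ref{lem:persistence} and Lemma~\ref{lem:newedge}. Your sketched charging argument for the cascade matches the spirit of the paper's potential argument (constant credit per explicit insertion, repaid against the $2\degbound$ forced insertions when a senator is created), so there is no substantive gap.
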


\begin{proof}
When we create a new version $v$, we need to perform the following operations:
\begin{itemize}
\item We add a version $v$ in \gvd and in $\for$, such that $v$ is the child of the version(s) that we are deriving from, in $O(\log n)$ time and $O(1)$ space.
\item We create an explicit version in \clvt{d}, where $d$ is the root of the data structure at version $v$. As we have a finger to the edge representing the previous version in $\clvt{d}$, the new vertex is inserted at a constant distance, and thus in $O(\log n)$ time. This new explicit version might be a senator, in which case we will have to propagate explicit versions in some neighbors of $d$. We will see in Lemma~\ref{lem:persistence} that this can be done in $O(\log n)$ amortized time and space. 
\item In the case of a confluent merge, we need to find a lowest common ancestor (in $O(log n)$ time), and cut two edges in $\for$, and create an edge between two nodes belonging to two different versions. Lemma~\ref{lem:newedge} proves that creating the edge can be done in $O(\log n)$ amortized time and space. 
\end{itemize}
\end{proof}

\begin{lemma}
Let $v \in \clvt{d}$ be a senatorial version. If one of its subtrees contains $i>0$ senators, it has size at least $3i/ 2$, provided $\dist \ge 2$. 
\end{lemma}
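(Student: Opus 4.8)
The plan is to prove the bound by a simple charging argument, exploiting the fact that two senators can never be adjacent in $\clvt{d}$ together with the bound of $2$ on the out-degree of the (compressed) local version tree. First I would isolate the one structural consequence of the senator rule that drives everything: since $\dist \ge 2$, whenever a vertex is a senator, the vertices immediately above it on the path to the root (at distance $1,\dots,\dist$) are non-senatorial, so in particular the \emph{parent} of any senator is a non-senator. Equivalently, no two parent--child vertices of $\clvt{d}$ are both senators.

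Next I would set up the count. Let $T$ be the subtree hanging off the child $c$ of $v$ that we are considering, and suppose $T$ contains $i>0$ senators. Because $v$ is itself a senator, its child $c$ cannot be a senator (otherwise $c$ would be a senator whose parent $v$ is also a senator, contradicting the previous observation). Hence every senator of $T$ is a proper descendant of $c$, and therefore each of the $i$ senators has its parent also lying in $T$, with that parent being a non-senator. I would then charge each senator of $T$ to its parent; this defines a map from the $i$ senators of $T$ into the set of non-senatorial vertices of $T$. Since every vertex of $\clvt{d}$ has out-degree at most $2$ (inherited from the out-degree bound on $\gvd$), each non-senator is the parent of at most two senators, so this map is at most two-to-one. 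Consequently $T$ contains at least $i/2$ distinct non-senatorial vertices, and adding these to the $i$ senators (all of which are vertices of $T$) gives $|T| \ge i + i/2 = 3i/2$, as claimed.

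The counting itself is routine: with an out-degree bound $k$ in place of $2$ the identical argument would yield the weaker $|T| \ge (1+1/k)\,i$, so the constant $3/2$ is exactly the $k=2$ instance. The step I expect to be the real obstacle is justifying that ``a senator's parent is never a senator'' as a genuine invariant of the maintained structure rather than merely a property at the moment of creation; one must check that the propagation step, which inserts explicit (and possibly senatorial) versions into $\clvt{d}$ after the fact, cannot place a senator directly above an existing senator. This is precisely where the hypothesis $\dist \ge 2$ is needed, and it is the hinge of the whole proof.
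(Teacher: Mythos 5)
Your argument is correct under the same structural assumptions the paper makes, but it takes a genuinely different route. The paper's proof uses the full spacing guarantee (senators along a root-to-leaf path of \clvt{d} are at distance at least $\dist+1$ from one another) and then simply \emph{asserts} the extremal configuration: a complete balanced binary tree with the $i$ senators at its leaves, hung from $v$ by a path of length $\dist+1-\lceil\log_2 i+1\rceil$, whose size is then computed to be at least $2i-1$ plus the path length. Your proof instead uses only the distance-$1$ consequence of the senator rule (a senator's parent is never a senator) and converts it into a $\le 2$-to-$1$ charging of senators to non-senatorial parents inside the subtree. This buys rigor: the paper never justifies that its configuration really is the worst case, whereas your double count is airtight given its two premises; it also isolates where the constant comes from ($3/2 = 1+1/k$ for out-degree $k=2$, and in fact $\dist\ge 1$ already suffices for your static argument). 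What it gives up is sharpness: the paper's configuration, if one accepts the extremal claim, suggests the true bound for $\dist \ge 2$ is closer to $2i$, though only $3i/2$ is needed in Lemma~\ref{lem:persistence}.

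The step you should not gloss over, however, is the parenthetical ``out-degree at most $2$, inherited from the out-degree bound on \gvd.'' That inheritance holds for \lvt{d}, which is a connected subgraph of \gvd, but \clvt{d} is obtained by \emph{contracting} implicit versions into their explicit representatives, and contraction can increase out-degree: a version at which \gvd branches need not be explicit in \clvt{d} (for a non-root node, nothing in the construction forces it to be), and then both branches attach as children of the same explicit representative, which can accumulate arbitrarily many children in \clvt{d}. This matters for the lemma itself: with $\dist=2$, a subtree consisting of a non-senator root, one non-senator child, and $m$ senatorial grandchildren has $i=m$ senators but only $m+2$ vertices, which is below $3i/2$ once $m\ge 5$. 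So an out-degree bound on the \emph{compressed} tree is genuinely necessary, not merely convenient. To be fair, the paper's own proof silently assumes exactly the same thing (its worst case is a \emph{binary} tree), and your closing worry---that senator propagation might insert a new senator directly above an existing one---is likewise a maintenance issue the paper waves away by asserting its ``if and only if'' characterization as an invariant. You are therefore not missing anything the paper supplies; but if you want a complete proof, the claim to establish (or the hypothesis to add) is the out-degree bound on \clvt{d}, and that, rather than the parent-is-not-a-senator invariant, is the real hinge.
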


\begin{proof}
A node is a senator if and only if it has no senatorial ancestor at distance less than $\dist +1$. 
The worst case, i.e., the smallest subtree achieving the largest number of senators versus non-senator ratio, is a complete balanced binary tree with $i$ senators in the leaves, connected to a path from $v$ of length $\dist+1 - \lceil\log_{2}{i} +1\rceil$. The complete balanced tree has size larger or equal to $2i-1$. 

So the total number of nodes in the subtree is at least $2i-1+\dist +1 - \lceil\log_{2}{i}+1\rceil $, which is always larger or equal to $3i/ 2$ provided $\dist \ge 2$.
\end{proof}

\begin{lemma}
Adding a new explicit version $v$ in the local version tree \clvt{d} of a node $d \in \ds$ takes $O(1)$ amortized time and space. \label{lem:persistence}
\end{lemma}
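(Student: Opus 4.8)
The plan is to bound the amortized cost of inserting a new explicit version $v$ into $\clvt{d}$ by a careful amortized analysis, charging the potentially expensive senator propagation against a potential function that tracks the discrepancy between the number of senators and the size of each local version tree. The key structural fact we already have is the preceding lemma: any subtree hanging off a senatorial version that contains $i>0$ senators must have size at least $3i/2$ (for $\dist \ge 2$). This says that senators are ``spread out'' and cannot accumulate densely, which is precisely what we need to pay for propagation.

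First I would isolate what actually happens on an insertion. Adding $v$ explicitly to $\clvt{d}$ itself is $O(1)$ worst-case work (we have a finger to the representative of the parent version, so we splice in the new explicit vertex at constant distance and update the $O(1)$ fields). The only source of super-constant work is the senator cascade: if $v$ becomes a senator — which happens only when its ancestor at distance exactly $\dist$ along a senator-free path forces it — then we must make $v$ explicit in every incoming edge overlay and in $\clvt{d'}$ for each predecessor $d'$ having an edge into $d$ at version $v$, and those newly explicit versions may themselves become senators, triggering further propagation. Since $\dist$ is a constant and the in-degree of every node is bounded by $\degbound$, each individual senator that gets created spawns only $O(\degbound) = O(1)$ immediate insertions; thus the total work of one insertion is proportional to the number of \emph{new senators} created across the whole cascade.

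The heart of the argument is therefore an accounting scheme that gives each version enough ``credit'' to pay for the at most one time it is ever promoted to senatorship. I would define a potential function of the form $\Phi = c \sum_{d} (\text{number of non-senatorial explicit vertices in } \clvt{d})$ for a suitable constant $c$ depending on $\dist$ and $\degbound$, or equivalently charge each senator creation to the batch of non-senatorial versions that its existence certifies. The previous lemma is what makes this work: by the $3i/2$ bound, creating the $i$-th senator in a subtree is witnessed by at least a constant fraction of new non-senatorial nodes (roughly, each senator ``costs'' at least a constant number of ordinary versions in its vicinity, since the $1/\dist$ sparsity is maintained). Each ordinary insertion of a non-senatorial vertex deposits a constant amount of potential; when a version is promoted to senator, it both consumes $O(1)$ work locally and releases the potential stored in the surrounding sparse region to pay for its own propagation into the $O(\degbound)$ neighbors. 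Because a version can be promoted to senator at most once and never demoted, this charging is never double-counted, so the amortized cost per explicit insertion collapses to $O(1)$ in both time and space.

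The hard part will be pinning down the potential constant so that it simultaneously dominates the branching of the cascade. The danger is a ripple effect: promoting $v$ to senator in $\clvt{d}$ inserts explicit versions into neighboring trees $\clvt{d'}$, and those insertions could in principle create long chains of fresh senators that themselves cascade further, so a naive per-insertion argument risks an unbounded recursion. The way to defuse this is to observe that the sparsity lemma holds in \emph{every} local version tree uniformly, so each senator promotion anywhere in the forest is backed by its own local reservoir of non-senatorial nodes; one must verify that the insertions forced \emph{into} $\clvt{d'}$ by a cascade still respect the ``at most a $1/\dist$ fraction are senators'' property, so that the released potential always covers the constant branching factor $O(\degbound)$. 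Making this global charging consistent — confirming that the potential accrued when each version was first inserted is enough to cover its single eventual promotion plus its constant-sized propagation, without any version's credit being claimed twice — is the one step that requires genuine care; the rest reduces to multiplying constants together with the $O(\log n)$ link-cut operations absorbed earlier.
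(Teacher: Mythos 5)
Your proposal is correct and follows essentially the same argument as the paper: a potential function in which each explicit insertion deposits a constant, the $3i/2$ sparsity lemma guarantees enough stored potential to pay the $O(\degbound)$ insertions forced by each senator, and cascading promotions are paid by the same charging applied uniformly in every local version tree. The only difference is cosmetic (you store potential on non-senatorial vertices while the paper credits every insertion with \const and pays one unit for itself), and the "hard part" you defer is exactly the one-line computation the paper does, showing $\const \ge (4\degbound+1)/3$ suffices.
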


\begin{proof}
We use a potential function argument: every time a new explicit version is added to a node $d$, we increase the potential of \clvt{d} by a constant \const. Then, two cases occur. Either the new version will be senatorial or not. If it is not, we simply need to create a new vertex in \clvt{d}, and we use one unit of potential to pay for it, and overall the potential has increased by $\const - 1$.  If it is a senator, we need to visit all incoming edges $(d',d)$, add a new explicit version in each \cov{d',d}, and add a new explicit version in each \clvt{d'}. This means a total of $2\degbound$ extra insertions at most (as the in-degree is bounded by \degbound), that we need to be able to pay for using potential. 

At the time where the $i^{th}$ senator is inserted in any subtree of another senator, the subtree has size at least $3i / 2$. Thus, the potential is at least ${3i (\const-1) \over 2}- 2i \degbound$ after the insertion. As long as $ \const \ge {4 \degbound +1\over 3}$, we guarantee that there is always enough potential to pay for the insertions. 

The new explicit versions in \clvt{d'} might also be senators, in which case each of them will also require an extra $2\degbound$ insertions, also paid by potential, and so on. Thus, we conclude that an insertion takes $O(1)$ amortized time. 
\end{proof}

\subsection{Modifications to the structure}
\begin{lemma}
Creating or deleting an edge, or modifying any field of a node $d$ in the data structure at version $v$ takes $O(\log n)$ amortized time and space. \label{lem:newedge} 
\end{lemma}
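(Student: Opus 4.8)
```latex
\paragraph{Proof proposal.}
The plan is to reduce each of the three atomic modifications---field change, edge creation, edge deletion---to a constant number of explicit-version insertions into compressed local version trees and compressed edge overlays, and then invoke Lemma~\ref{lem:persistence} to charge the insertions. The essential observation is that every modification of $d$ at version $v$ conceptually creates a new version in which $d$ (and possibly an incident edge) changes, so the work splits into (i) locating the right place to insert in the relevant compressed structures, and (ii) performing a bounded number of insertions there. Step (i) is where I would be careful: because we carry a finger to the vertex of $\clvt{d}$ representing the previous version, the new explicit vertex is inserted at constant distance from the finger, so the structural update to $\clvt{d}$ itself is $O(\log n)$ (the logarithmic factor absorbing any balanced-search-structure or link-cut bookkeeping).

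First I would handle a plain field modification. Since modifying a field of $d$ at $v$ changes the contents of $d$ but touches no edge, it suffices to ensure $v$ is explicit in $\clvt{d}$: by the definition of \clvt d, a new explicit vertex must be created precisely at the version where the fields change. I would insert this vertex using the finger, then invoke Lemma~\ref{lem:persistence} to conclude the insertion (including any senatorial propagation it triggers) costs $O(1)$ amortized time and space, giving $O(\log n)$ worst case for the location/link-cut operations.

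Next I would treat edge creation and deletion together, since both must respect the edge-overlay invariants. Creating (or deleting) the edge $(a,b)$ at $v$ requires, by the stated invariant, an explicit vertex for $v$ in $\cov{a,b}$ marking the creation/deletion point, and, because each explicit vertex of the overlay must appear explicitly in both $\clvt{a}$ and $\clvt{b}$, explicit vertices for $v$ in $\clvt{a}$ and $\clvt{b}$ as well. That is a constant number of insertions (one per structure, with $a,b$ fixed). I would also rebuild the constant-size bundle of node-edge navigation pointers (version-edge, edge-version, and inverse edge-version pointers) incident to these newly explicit vertices. Each of these insertions is again charged through Lemma~\ref{lem:persistence}; crucially, the amortized accounting there already budgets the $2\degbound$ propagation insertions per senator, so the senators forced into neighboring structures by a freshly explicit senatorial vertex are paid for without any new argument.

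The main obstacle I anticipate is making sure the $O(\log n)$ factor genuinely dominates every non-constant substep: locating the insertion point relative to the finger, updating the link-cut tree $\for$ when the set of explicit/implicit versions partitions differently, and maintaining the fractional-cascading senator structure so the distance bound $\dist+1$ between consecutive explicit overlay vertices is preserved. I would argue that each of these is either constant (finger-local insertion), or a single link-cut operation ($O(\log n)$ amortized), or folded into the potential of Lemma~\ref{lem:persistence}; since there are only $O(1)$ of them per modification, the total is $O(\log n)$ amortized time and, because each insertion adds $O(1)$ explicit vertices and pointers, $O(\log n)$ amortized space. The delicate bookkeeping is verifying that deletion does not violate the connectivity/tree properties established in Lemma~1 and the overlay-connectedness claim, so I would explicitly note that both endpoints $a,b$ of the affected edge are fixed and the in-degree bound \degbound keeps the number of affected incident overlays constant.
```
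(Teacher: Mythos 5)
Your proposal is correct and follows essentially the same route as the paper: both reduce any field change or edge creation/deletion to creating the new version (the $O(\log n)$ link-cut/version-DAG work, which the paper attributes to Lemma~\ref{lem:newversion}) plus a constant number of explicit-vertex insertions in $\clvt{a}$, $\clvt{b}$, and $\cov{a,b}$, each charged at $O(1)$ amortized through the senatorial potential argument of Lemma~\ref{lem:persistence}. The paper's proof is just a terser version of yours.
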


\begin{proof}
Any modification will first imply to construct a new version, in $O(\log n)$ time (see Lemma~\ref{lem:newversion}). We are browsing version $v$, and we have a finger on an item that we want to change\footnote{Or rather, a pointer to the edge representing the previous version in \clvt{d}.}. To achieve this, we simply need to make the current version explicit in the local version tree of that node, and to propagate the information. If an edge is created or deleted, we need to change the info in both the predecessor and successor, and create an edge overlay containing a single version. All these operations are processed in $O(1)$ amortized time, as they consist in modifying a constant number of nodes, and adding a constant number of explicit versions (possibly needing amortized constant senatorial spreading). 
\end{proof}

\noindent From this set of lemmas, we deduce our main theorem:

\begin{theorem}
A data structure \ds with maximum in-degree $\maxdeg$ can be made confluently persistent using $O(U \log \maxdeg)$ space and time, where $U$ is the number of operations performed to create all versions of the structure. Each subsequent modification of the structure takes $O(\log n)$ amortized time and $O(1)$ space. Browsing the structure costs $O(\log \maxdeg \log n)$ amortized time per edge traversed. This holds provided no confluent merge operation uses twice the same node.
\end{theorem}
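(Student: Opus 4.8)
The plan is to assemble the theorem directly from Lemmas~\ref{lem:newversion}, \ref{lem:persistence} and~\ref{lem:newedge}, together with the in-degree reduction of the Model section; no new machinery is needed, only careful accounting of how the constant-in-degree per-operation bounds compose into global bounds and of how the factor $\log\maxdeg$ propagates. First I would reduce the general case to the bounded-in-degree case by applying the transform that replaces every node of in-degree exceeding \degbound with a balanced binary search tree of depth $O(\log\maxdeg)$, so that all incoming edges now terminate at leaves. This adds only a linear number of constant-sized nodes, so it costs no asymptotic space, while each \emph{pointer following} in the original structure becomes a walk of $O(\log\maxdeg)$ edges in the transformed structure and node creation/deletion and field changes remain $O(1)$. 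Hence, writing $U$ for the number of operations issued against the original structure, the number $n$ of atomic operations actually executed against the bounded-in-degree structure satisfies $n=O(U\log\maxdeg)$.

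Next I would read off the per-operation costs in the bounded-in-degree structure. For time, Lemma~\ref{lem:newversion} gives that deriving a new version (branch or confluent merge) costs $O(\log n)$ amortized, and Lemma~\ref{lem:newedge} gives the same bound for creating or deleting an edge or changing a field; this $O(\log n)$ is entirely the link-cut-tree overhead of maintaining \gvd and \for. For space, the only structural growth per atomic operation is a constant number of new vertices and pointers together with the senatorial spreading, which Lemma~\ref{lem:persistence} bounds by $O(1)$ amortized time and space; a link-cut-tree insertion adds $O(1)$ amortized space and a cut adds none. Thus each atomic modification costs $O(\log n)$ amortized time and $O(1)$ amortized space, which is precisely the per-modification claim, and summing the space over all $n$ atomic operations --- legitimate since the potential of Lemma~\ref{lem:persistence} is nonnegative and initially zero --- yields total space $O(n)=O(U\log\maxdeg)$, the corresponding construction time following by multiplying the $n$ operations by their $O(\log n)$ amortized cost.

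For the browsing bound I would invoke the navigation procedure of the ``Browsing the Data Structure'' section: moving the finger across a single edge of the bounded-in-degree structure performs a constant number of pointer hops in the node-edge navigation structure plus a number of link-cut-tree cuts proportional to the size of the $\dist$-neighborhood of $r_1$ in \clvt{b}, which is constant because $\dist$ is a constant. Each such cut or ancestor query costs $O(\log n)$ amortized, so one transformed-edge traversal is $O(\log n)$ amortized; translating back through the transform, following one original pointer is $O(\log\maxdeg)$ transformed-edge traversals, for $O(\log\maxdeg\,\log n)$ amortized time per edge traversed. Correctness is inherited unchanged because every node introduced by the transform is an ordinary bounded-in-degree node handled by the same \clvt{d} and \cov{a,b} structures, and the no-self-merge precondition guarantees (first lemma) that each local version DAG is a tree, so the whole apparatus applies.

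The bulk of the argument is bookkeeping rather than a single hard step, and the point most needing care is the double role of $\log\maxdeg$: it appears once multiplicatively in $n=O(U\log\maxdeg)$, governing total space and the number of modifications, and a second time per original pointer in the browsing cost, and the two must not be conflated. I would also verify that composing the separate amortized analyses is sound --- each lemma's potential is nonnegative and only the senatorial-spreading potential of Lemma~\ref{lem:persistence} can grow, so the global amortized bounds are the sums of the local ones --- and that the transform respects the merge restriction, which it does because the tree nodes replacing a high-degree node inherit exactly the version membership of that node and therefore never cause a node to appear twice in a confluent merge.
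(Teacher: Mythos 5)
Your proposal is correct and follows essentially the same route as the paper's own proof: reduce to the bounded-in-degree case via the balanced-BST transform from the Model section, assemble the per-operation bounds from Lemmas~\ref{lem:newversion}, \ref{lem:persistence} and~\ref{lem:newedge}, and charge an extra $O(\log \maxdeg)$ factor per pointer traversal introduced by the transform. Your extra bookkeeping (nonnegativity of the potentials, $n = O(U\log\maxdeg)$, and compatibility of the transform with the no-self-merge restriction) only makes explicit what the paper's proof leaves implicit, including its choice $\dist = 2$, $\const = 3$.
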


\begin{proof}
As pointed out in the model section, we can transform and maintain the structure so that the maximum in-degree is always bounded by $\degbound$, for any constant larger or equal to 2. 

By fixing $\dist =2$ and $\const = 3$, all lemmas hold. The fractional cascading and edge overlay invariants are maintained whenever explicit vertices are added.
Following an edge (d,d') only implies to look, in the $\clvt{d}$, for vertices in a $O(\dist)$-neighborhood, which is achieved in $O(\log n)$ time; and using a constant number of pointers, from \clvt{d} to \clvt{d'} using \cov{d,d'}. But our transformation for bounded in-degree implies a $O(\log maxdeg)$ factor on following a pointer. 
\end{proof}

In particular, notice that this means that in the model used in~\cite{FullPersistence} where the in-degree is bounded by a constant, confluent persistence can be achieved at a logarithmic cost (in the amortized asymptotic sense):

\begin{corollary}
A data structure \ds with constant in-degree can be made confluently persistent using $\Theta(U)$ space, where $U$ is the number of operations performed to create the structure. Each subsequent modification of the structure or new version created takes $O(\log n)$ amortized time and $O(1)$ space. Browsing the structure costs $O(\log n)$ amortized time per edge traversed. This holds provided no merge operation uses twice the same node.
\end{corollary}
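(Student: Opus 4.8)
The plan is to obtain this statement as an immediate specialization of the main theorem, supplemented by a matching space lower bound. The key observation is that when the in-degree of \ds is already bounded by a constant, the balanced-binary-search-tree transformation from the model section is unnecessary: we may simply take \degbound to be the given constant in-degree bound. The entire augmented structure---the global version DAG \gvd, the forest \for stored as a link-cut tree, every compressed local version tree \clvt{d} and compressed overlay \cov{a,b}, and the node-edge navigation structure---is then built exactly as in the general construction, with $\degbound = O(1)$ and the parameters $\dist = 2$, $\const = 3$ fixed as in the theorem's proof.

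With this substitution the stated bounds follow by tracking the dependence on \maxdeg. Since $\maxdeg = O(1)$ we have $\log \maxdeg = O(1)$, so the total space bound $O(U \log \maxdeg)$ of the main theorem collapses to $O(U)$. The per-operation costs are inherited directly from the component lemmas and are independent of the in-degree: creating a new version costs $O(\log n)$ amortized time and $O(1)$ space by Lemma~\ref{lem:newversion}, and any modification of a field or edge costs $O(\log n)$ amortized time and $O(1)$ amortized space by Lemma~\ref{lem:newedge}. For browsing, because no in-degree transformation is applied there is no additional $O(\log \maxdeg)$ penalty on following a pointer; each edge traversal reduces to inspecting the $O(\dist)$-neighborhood of a single vertex in a compressed local version tree together with a constant number of link-cut operations, which is $O(\log n)$ amortized time per edge.

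It remains to upgrade the $O(U)$ space bound to $\Theta(U)$, which requires a matching lower bound. Here I would argue that $\Omega(U)$ space is unavoidable: an adversary can issue a sequence of $U$ atomic operations each of which alters a distinct field or creates a distinct node, and every such change must be recorded somewhere in the persistent representation, since otherwise a later query on the version in which the change was made could not recover the altered value. As there are $U$ distinct pieces of information that must be simultaneously retained, any confluently persistent representation uses $\Omega(U)$ space, matching the upper bound and yielding $\Theta(U)$.

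Since the argument is essentially bookkeeping on top of the main theorem, I expect no serious obstacle. The only two points requiring care are (i) justifying that the in-degree transformation may be dropped entirely when the in-degree is already constant, so that the browsing cost is genuinely $O(\log n)$ rather than $O(\log \maxdeg \log n)$, and (ii) phrasing the information-theoretic lower bound cleanly enough to claim $\Theta(U)$ rather than merely $O(U)$, being careful that the \emph{space} bound is the one being tightened while the per-operation \emph{time} bound remains $O(\log n)$ amortized.
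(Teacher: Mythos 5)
Your proposal is correct and follows essentially the same route as the paper, which states this corollary as an immediate specialization of the main theorem with $\maxdeg = O(1)$ (hence $\log \maxdeg = O(1)$), giving no separate proof. Your added adversary argument for the $\Omega(U)$ direction of the $\Theta(U)$ space claim is a harmless extra justification the paper leaves implicit, since each of the $U$ operations must be recorded in any case.
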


{
\bibliographystyle{plain}
\bibliography{biblio}
}
\end{document}